\title{\bf Extending Newton's Laws of Motion with Free Will}
\author{Giovanni Giuffrida}
\affil{University of Catania \\ \texttt{giovanni.giuffrida@unict.it}}
\author{Calogero G. Zarba}
\affil{Neodata Group s.r.l. \\ \texttt{calogero.zarba@neodatagroup.com}}
\date{}
\theoremstyle{plain}
\newtheorem{theorem}{Theorem}
\theoremstyle{definition}
\newtheorem{definition}{Definition}
\newtheorem{example}{Example}
\newtheorem{conjecture}{Conjecture}
\begin{document}
\maketitle

\begin{abstract}
We study the concept of free will by defining a mathematical model that extends Newton's laws of motion, in such a way that bodies are replaced with agents endowed with free will.
In our model the free will of agents
is not entirely free, but is bound by the Golden Rule, an 
ethic found in almost all cultures, which states that one should wish upon others as one wishes upon himself.
\end{abstract}

% --------------------
\section{Introduction}
Free will~\cite{Kan2005,Lan1968,Omo2015} is defined as the ability of agents to choose among a set of possible actions.

Free will has implications in religion, ethics, and science.
In religion, the existence of free will limits the power of deities.
In ethics, the existence of free will implies that agents are morally responsible for their own actions.
In science, the existence of free will means that the law of nature cannot be completely deterministic, but must allow for a certain degree of non-random indeterminism.

In this paper we study the concept of free will by defining a mathematical model that extends Newton's laws of motion~\cite{TM2003}, 
in such a way that bodies are replaced with agents endowed with free will. 
Although historically Newton's laws of motion have been linked to determinism and a clockwork universe,
we show that Newton's laws are consistent with indeterminism and therefore with a universe in which free will is present.
In fact, while it is true that in Newton's original treatment the forces between bodies are deterministic and depend solely on the past,
in this paper the forces between agents are not deterministic and are instead chosen by agents by virtue of free will.

We first define our model assuming that time is discrete, and then we extend it to the case of continuous time.
In both the discrete and continuous cases, we view agents as points in a three-dimensional space with a fixed position and velocity for each instant of time.
Agents extert on each others forces that modify their positions and velocities in a deterministic way.
However, the forces exterted by the agents depend on free will, that is, they depend on choices made by the agents.  These choices are
not completely free, but they are bound by the Golden Rule, an ethic found in almost all cultures, 
which can be stated in various forms:
\begin{itemize}
\item \emph{treat others as you would like others to treat you} (positive or directive form);

\item \emph{do not treat others in ways that you would not like to be treated} (negative or prohibitive form);

\item \emph{what you wish upon others, you wish upon yourself} (empathetic or responsive form).
\end{itemize}
In our model, the Golden Rule takes the following form: whenever an agent exterts a force on another agent, it also must extert upon itself a force of opposite direction and equal intensity.

In our model of free will, agents choose their forces gradually, as time goes by.  
Precisely, the forces exterted by the agents at instant of time $t$ are chosen immediately 
after observing their positions and velocities at instant of time $t$.

Our model of free will is non-local, in the sense that agents can extert forces on each other no matter what is the distance between them.
To reduce non-locality, we introduce and discuss the concept of bounded free will, in which the forces that agents
extert on each other are limited by the inverse of the distance between them.

The rest of this paper is organized as follows.  
Section~\ref{se:related} reviews related work.
Section~\ref{se:discrete} rigorously defines our mathematical model of free will in the case that time is assumed to be discrete.
Section~\ref{se:theorems-discrete} formally proves some properties of our discrete time model of free will.
Section~\ref{se:continuous} extends our discrete time model of free will to the case in which time is continuous.
Section~\ref{se:theorems-continuous} formally proves some properties of our continuous time model of free will.
Section~\ref{se:newton} compares our mathematical model of free will with Newton's laws of motion.
Section~\ref{se:timing} discusses the subtle point of when agents make their choices.
Section~\ref{se:bounded} introduces and discuss the concept of bounded free will.
Section~\ref{se:conclusion} concludes the paper with final remarks and hints for future work.

% --------------------
\section{Related Work}
\label{se:related}

The concept of free will is connected with the concept of determinism~\cite{Lap2018,Smi2000}, 
the view that all events are completely determined by previous causes.  
According to incompatibilists~\cite{Kan1989,Van1975}, determinism suggests
that there is only one possible course of action, and therefore impedes the possibility of agents to make choices.
On the other hand, there are compatibilists~\cite{Hum2018} which hold the view that 
determinism is not a threat to free will, and therefore the two of them are compatible.

The concept of free will has been studied in quantum mechanics, where the free will theorem~\cite{CK2006} states that, if people
have free will then, under certain assumptions, so must some elementary particles.

In biology~\cite{PS2017}, the concept of free will is at the core of the nature vs.\ nurture debate, the problem of 
whether human behavior is determined by the environment or by a person's genes.

In the neurosciences~\cite{Lib1985}, the concept of free will is analyzed by experiments on human brain activity
which try to establish at which time people become aware of their actions.

This paper is the second attempt of the authors to mathematically model the concept of free will and its relationship with the Golden Rule.
In the first attempt~\cite{GZ2013}, the authors used social network analysis, and defined a model of a society in which persons are
endowed with free will.  According to this social model, if a person $A$ wishes upon another person $B$, then in the future person $B$ must
wish in the same way upon person $A$.  
The authors now believe this is an ethically wrong view of the Golden Rule because, if an action made by a person to another
person is ethically wrong, repeating it with roles reversed does not make it right.  
Furthermore, there is an unnecessary restriction on the free will of persons.  In other words, if you
perform an action on me, I should have the power not to respond with the same currency.

% ---------------------------------------
\section{Discrete time model of free will}
\label{se:discrete}

\begin{definition}[Free will system]
A \emph{free will system} is a finite non-empty set $\{a_1, \ldots, a_n\}$ of agents, together with a map that assigns to each agent $a_i$
\begin{itemize}
\item an individual mass $m_i \in \mathbb{R}^+$;

\item an individual initial position $\mathbf{x}_i^{(0)} \in \mathbb{R}^3$;

\item an individual initial velocity $\mathbf{v}_i^{(0)} \in \mathbb{R}^3$.
\end{itemize}
\end{definition}

\begin{example}
\label{ex:system}
A free will system $S = \{a_1, a_2\}$ with only two agents may be specified by letting
\begin{align*}
  m_1 &= 1                         & m_2 &= 1 \\
  \mathbf{x}_1^{(0)} &= (-1, 0, 0) &\mathbf{x}_2^{(0)} &= (1, 0, 0) \\
  \mathbf{v}_1^{(0)} &= (0, 0, 0)  &\mathbf{v}_2^{(0)} &= (0, 0, 0)
\end{align*}
\end{example}

\begin{definition}[Discrete individual will]
\label{de:will}
Let $S$ be a free will system whose set of agents is $\{a_1, ..., a_n\}$.
A \emph{discrete individual will} of agent $a_i$ in $S$ is a set of functions
\begin{equation*}
  \left \{ \mathbf{F}_{ij} : \mathbb{N} \to \mathbb{R}^3 \mid j \neq i \right\} \,.
\end{equation*}
\end{definition}

Intuitively, in the definition of discrete individual will, $\mathbf{F}_{ij}(t)$ is the force exterted by agent $a_i$ upon agent $a_j$ at
instant of time $t$, determined by the free will of agent $a_i$.

\begin{definition}[Discrete collective will]
Let $S$ be a free will system whose set of agents is $\{a_1, ..., a_n\}$.
A \emph{discrete collective will} for $S$ is the union $\bigcup_i w_i$, where $w_i$ is a discrete individual will of agent $a_i$ in $S$, for all $i$.
\end{definition}

\begin{example}
\label{ex:will}
Let $S = \{a_1, a_2\}$ be the free will system specified in Example~\ref{ex:system}.  
Let us informally assume that the wish of the two agents is to stay as far apart from each other as possible, and that
the wish of agent $a_1$ is stronger than the wish of agent $a_2$.  Mathematically, we specify the following collective will
\begin{align*}
  \mathbf{F}_{12}(t) &= (2, 0, 0) \,, &\text{for all } t \ge 0 \,, \\
  \mathbf{F}_{21}(t) &= (-1, 0, 0)\,, &\text{for all } t \ge 0 \,.
\end{align*}
\end{example}

\begin{definition}[Discrete history]
Let $S$ be a free will system whose set of agents is $\{a_1, ..., a_n\}$, 
and let $W = \left\{ \mathbf{F}_{ij} : \mathbb{N} \to \mathbb{R}^3 \mid i \neq j \right\}$ be a discrete collective will for $S$.
The \emph{discrete history} of $S$ with respect to $W$ is the following collection of functions:
\begin{itemize}
\item $\mathbf{F}_i : \mathbb{N} \to \mathbb{R}^3$, for each $i$, where $\mathbf{F}_i(t)$ is the total force exterted on agent $a_i$ at instant of time $t$;

\item $\mathbf{a}_i : \mathbb{N} \to \mathbb{R}^3$, for each $i$, where $\mathbf{a}_i(t)$ is the acceleration of agent $a_i$ at instant of time $t$;

\item $\mathbf{v}_i : \mathbb{N} \to \mathbb{R}^3$, for each $i$, where $\mathbf{v}_i(t)$ is the velocity of agent $a_i$ at instant of time $t$;

\item $\mathbf{x}_i : \mathbb{N} \to \mathbb{R}^3$, for each $i$, where $\mathbf{x}_i(t)$ is the position of agent $a_i$ at instant of time $t$.
\end{itemize}
For each $i$, the above functions are precisely defined as follows:
\begin{align}
\label{eq:total}
  \mathbf{F}_i(t) = \sum_{j \neq i} \left( \mathbf{F}_{ji}(t) - \mathbf{F}_{ij}(t) \right) \,, &&\text{for all } t \in \mathbb{N} \,,
\end{align}

\begin{align} 
\label{eq:second}
  \mathbf{a}_i(t) = \frac{\mathbf{F}_i(t)}{m_i} \,, &&\text{for all } t \in \mathbb{N} \,,
\end{align}

\begin{align*}
  \mathbf{v}_i(0) = \mathbf{v}_i^{(0)} \,, &&\phantom{'}
\end{align*}

\begin{align}
\label{eq:velocity}
  \mathbf{v}_i(t + 1) = \mathbf{v}_i(t) + \mathbf{a}_i(t) \,, &&\text{for all } t \in \mathbb{N} \,,
\end{align}

\begin{align*}
  \mathbf{x}_i(0) = \mathbf{x}_i^{(0)} \,, &&\phantom{'}
\end{align*}

\begin{align}
\label{eq:position}
  \mathbf{x}_i(t + 1) = \mathbf{x}_i(t) + \mathbf{v}_i(t + 1) \,, &&\text{for all } t \in \mathbb{N} \,.
\end{align}
\end{definition}

Notice that equation~\eqref{eq:total} enforces agents to obey the Golden Rule.  
In fact, at each instant of time $t$, an agent $a_i$ that chooses to extert a force $\mathbf{F}_{ij}(t)$ upon agent $a_j$, must also 
extert the force $-\mathbf{F}_{ij}(t)$ upon itself.

\begin{example}
Let $S = \{a_1, a_2\}$ be the free will system specified in Example~\ref{ex:system}, and let $W$ be the 
collective will specified in Example~\ref{ex:will}.  
Using a program written in Java, we run a simulation of the discrete history of $S$ with respect to $W$.
The output of the simulation, executed for three iterations, is as follows
\begin{verbatim}
  t = 0
    x[1] = (-1.0, 0.0, 0.0)
    v[1] = (0.0, 0.0, 0.0)
    x[2] = (1.0, 0.0, 0.0)
    v[2] = (0.0, 0.0, 0.0)
\end{verbatim}

\begin{verbatim}
  t = 1
    x[1] = (-4.0, 0.0, 0.0)
    v[1] = (-3.0, 0.0, 0.0)
    x[2] = (4.0, 0.0, 0.0)
    v[2] = (3.0, 0.0, 0.0)
\end{verbatim}

\begin{verbatim}
  t = 2
    x[1] = (-10.0, 0.0, 0.0)
    v[1] = (-6.0, 0.0, 0.0)
    x[2] = (10.0, 0.0, 0.0)
    v[2] = (6.0, 0.0, 0.0)
\end{verbatim}

\begin{verbatim}
  t = 3
    x[1] = (-19.0, 0.0, 0.0)
    v[1] = (-9.0, 0.0, 0.0)
    x[2] = (19.0, 0.0, 0.0)
    v[2] = (9.0, 0.0, 0.0)
\end{verbatim}
\end{example}

% ------------------------------------------------------
\section{Discrete time properties}
\label{se:theorems-discrete}

\begin{theorem}[Balance of forces]
\label{th:zero}
Let $S$ be a free will system whose set of agents is $\{a_1, ..., a_n\}$, 
let $W$ be a discrete collective will for $S$,
and let $H$ be the discrete history of $S$ with respect to $W$.
Then, for each $i$, the following holds
\begin{align*}
  \sum_i \mathbf{F}_i(t) = 0 \,, &&\text{for all } t \in \mathbb{N} \,.
\end{align*}
\end{theorem}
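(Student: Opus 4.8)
The plan is to compute $\sum_i \mathbf{F}_i(t)$ directly from its definition in equation~\eqref{eq:total} and show that every term cancels. First I would substitute the definition to obtain
\begin{align*}
  \sum_i \mathbf{F}_i(t) = \sum_i \sum_{j \neq i} \left( \mathbf{F}_{ji}(t) - \mathbf{F}_{ij}(t) \right) \,.
\end{align*}
Then I would split this into two double sums, $\sum_i \sum_{j \neq i} \mathbf{F}_{ji}(t)$ and $\sum_i \sum_{j \neq i} \mathbf{F}_{ij}(t)$, and observe that each is a sum of $\mathbf{F}_{kl}(t)$ over all ordered pairs $(k,l)$ with $k \neq l$; the two sums range over exactly the same index set, just with the roles of the summation variables interchanged. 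Hence they are equal, and their difference is $0$.

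The key steps, in order, are: (1) unfold the definition; (2) distribute the sum over the difference; (3) perform the index relabelling $i \leftrightarrow j$ in one of the two double sums to see it coincides with the other; (4) conclude the difference vanishes. An alternative, perhaps cleaner, presentation is to note that the single term $\mathbf{F}_{ij}(t)$ appears in $\mathbf{F}_i(t)$ with coefficient $-1$ (from the $-\mathbf{F}_{ij}(t)$ piece) and in $\mathbf{F}_j(t)$ with coefficient $+1$ (from the $+\mathbf{F}_{ji}(t)$ piece, with roles of indices swapped), and in no other $\mathbf{F}_k(t)$; so when we sum over all $k$, each $\mathbf{F}_{ij}(t)$ contributes $+1 - 1 = 0$.

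This is essentially a bookkeeping argument, so there is no real obstacle; the only thing to be careful about is the index manipulation in step~(3), namely making sure the swap of summation variables is legitimate (it is, since we are summing finitely many vectors in $\mathbb{R}^3$ and addition is commutative and associative) and that the diagonal terms $i = j$ are correctly excluded from both sums so that the relabelling is a genuine bijection of index sets. I would also remark that this theorem is the discrete analogue of the statement that internal forces in a Newtonian system sum to zero, which is exactly what the Golden Rule constraint encoded in equation~\eqref{eq:total} is designed to guarantee.
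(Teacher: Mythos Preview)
Your argument is correct and matches the paper's proof essentially line for line: substitute equation~\eqref{eq:total}, split the double sum into two, relabel indices in one of them to see the two sums coincide, and conclude the difference is zero. The only difference is that you spell out the index-swap justification and offer an alternative term-by-term cancellation view, both of which are fine but not needed beyond what the paper already does.
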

\begin{proof}
Using equation~\eqref{eq:total} and the commutative property of sums, we have
\begin{align*}
  \sum_i \mathbf{F}_i(t) &= \sum_i \sum_{j \neq i} \left( \mathbf{F}_{ji}(t) - \mathbf{F}_{ij}(t) \right) \\
  &= \sum_i \sum_{j \neq i} \mathbf{F}_{ji}(t) -  \sum_i \sum_{j \neq i} \mathbf{F}_{ij}(t) \\
  &= \sum_i \sum_{j \neq i} \mathbf{F}_{ij}(t) -  \sum_i \sum_{j \neq i} \mathbf{F}_{ij}(t) \\
  &= 0 \,.
\end{align*}
\end{proof}

\begin{definition}[Individual momentum]
Let $S$ be a free will system whose set of agents is $\{a_1, ..., a_n\}$, 
let $W$ be a discrete collective will for $S$,
and let $H$ be the discrete history of $S$ with respect to $W$.
The \emph{individual momentum} $\mathbf{p}_i(t)$ of agent $a_i$ at instant of time $t$ is
\begin{equation*}
  \mathbf{p}_i(t) = m_i \mathbf{v}_i(t) \,.
\end{equation*}
\end{definition}

\begin{definition}[Collective momentum]
Let $S$ be a free will system whose set of agents is $\{a_1, ..., a_n\}$, 
let $W$ be a discrete collective will for $S$,
and let $H$ be the discrete history of $S$ with respect to $W$.
The \emph{collective momentum} $\mathbf{p}_S(t)$ of $S$ at instant of time $t$ is
\begin{equation*}
  \mathbf{p}_S(t) = \sum_i \mathbf{p}_i(t) \,.
\end{equation*}
\end{definition}

\begin{theorem}[Conservation of collective momentum]
Let $S$ be a free will system whose set of agents is $\{a_1, ..., a_n\}$, 
let $W$ be a discrete collective will for $S$,
and let $H$ be the discrete history of $S$ with respect to $W$.
Then there exists a constant $\mathbf{p}_S \in \mathbb{R}^3$ such that
\begin{align*}
  \mathbf{p}_S = \mathbf{p}_S(t) \,, &&\text{for all } t \in \mathbb{N} \,.
\end{align*}
\end{theorem}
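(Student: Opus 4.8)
The plan is to show that the collective momentum does not change from one time step to the next, i.e.\ $\mathbf{p}_S(t+1) = \mathbf{p}_S(t)$ for all $t \in \mathbb{N}$, and then conclude by induction that $\mathbf{p}_S(t) = \mathbf{p}_S(0)$ for all $t$, so that the constant $\mathbf{p}_S := \mathbf{p}_S(0) = \sum_i m_i \mathbf{v}_i^{(0)}$ works.

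First I would expand $\mathbf{p}_S(t+1)$ using the definitions of collective and individual momentum:
\begin{align*}
  \mathbf{p}_S(t+1) = \sum_i m_i \mathbf{v}_i(t+1) \,.
\end{align*}
Next I would substitute the velocity update rule~\eqref{eq:velocity}, namely $\mathbf{v}_i(t+1) = \mathbf{v}_i(t) + \mathbf{a}_i(t)$, and then Newton's second law in the form~\eqref{eq:second}, $\mathbf{a}_i(t) = \mathbf{F}_i(t)/m_i$, to obtain
\begin{align*}
  \mathbf{p}_S(t+1) = \sum_i m_i \mathbf{v}_i(t) + \sum_i m_i \frac{\mathbf{F}_i(t)}{m_i}
  = \mathbf{p}_S(t) + \sum_i \mathbf{F}_i(t) \,.
\end{align*}
At this point I would invoke Theorem~\ref{th:zero} (balance of forces), which gives $\sum_i \mathbf{F}_i(t) = 0$, so that $\mathbf{p}_S(t+1) = \mathbf{p}_S(t)$.

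Finally I would close the argument by induction on $t$: the base case is immediate since $\mathbf{p}_S(0) = \mathbf{p}_S(0)$, and the inductive step is exactly the equality just established. Hence setting $\mathbf{p}_S = \mathbf{p}_S(0)$ yields $\mathbf{p}_S = \mathbf{p}_S(t)$ for all $t \in \mathbb{N}$. I do not anticipate any real obstacle here; the only point requiring a little care is the legitimacy of cancelling $m_i$ in the term $m_i \mathbf{F}_i(t)/m_i$, which is justified because the definition of a free will system stipulates $m_i \in \mathbb{R}^+$, hence $m_i \neq 0$. The heavy lifting — the reindexing of the double sum that makes the forces cancel — has already been done in Theorem~\ref{th:zero}, so this proof is essentially a one-line consequence of it together with the update rules.
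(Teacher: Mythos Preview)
Your proposal is correct and follows essentially the same route as the paper: use the velocity update~\eqref{eq:velocity} together with~\eqref{eq:second} to get $\mathbf{p}_S(t+1)=\mathbf{p}_S(t)+\sum_i\mathbf{F}_i(t)$, then apply Theorem~\ref{th:zero} and conclude by induction. The only cosmetic difference is that the paper multiplies through by $m_i$ before summing, whereas you sum first and cancel $m_i$ inside the sum; both are fine since $m_i>0$.
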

\begin{proof}
It is enough to prove, by induction on $t$, that $\mathbf{p}_S(t) = \mathbf{p}_S(0)$.
The base case is trivial.  For the inductive step, notice that by equation~\eqref{eq:velocity} we have
\begin{equation*}
  \mathbf{v}_i(t + 1) = \mathbf{v}_i(t) + \mathbf{a}_i(t) \,.
\end{equation*}
By equation~\eqref{eq:second} it follows that
\begin{equation*}
  \mathbf{v}_i(t + 1) = \mathbf{v}_i(t) + \frac{\mathbf{F}_i(t)}{m_i} \,,
\end{equation*}
which implies
\begin{equation*}
  m_i \mathbf{v}_i(t + 1) = m_i \mathbf{v}_i(t) + \mathbf{F}_i(t) \,,
\end{equation*}
But then
\begin{equation*}
  \sum_i m_i \mathbf{v}_i(t + 1) = \sum_i m_i \mathbf{v}_i(t) + \sum_i \mathbf{F}_i(t) \,,
\end{equation*}
and by Theorem~\ref{th:zero}, it follows
\begin{equation*}
  \sum_i m_i \mathbf{v}_i(t + 1) = \sum_i m_i \mathbf{v}_i(t) \,,
\end{equation*}
which is equivalent to
\begin{equation*}
  \mathbf{p}_S(t + 1) = \mathbf{p}_S(t) \,.
\end{equation*}
\end{proof}

\begin{definition}[Collective position]
Let $S$ be a free will system whose set of agents is $\{a_1, ..., a_n\}$, 
let $W$ be a discrete collective will for $S$,
and let $H$ be the discrete history of $S$ with respect to $W$.
The \emph{collective position}\footnote{Also known in literature as center of mass.} $\mathbf{x}_S(t)$ of $S$ at instant of time $t$ is
\begin{equation*}
  \mathbf{x}_S(t) = \frac{\sum_i m_i \mathbf{x}_i(t)}{\sum_i m_i} \,.
\end{equation*}
\end{definition}

\begin{definition}[Collective velocity]
Let $S$ be a free will system whose set of agents is $\{a_1, ..., a_n\}$, 
let $W$ be a discrete collective will for $S$,
and let $H$ be the discrete history of $S$ with respect to $W$.
The \emph{collective velocity} $\mathbf{v}_S(t)$ of $S$ at instant of time $t$ is
\begin{equation*}
  \mathbf{v}_S(t) = \frac{\sum_i m_i \mathbf{v}_i(t)}{\sum_i m_i} \,.
\end{equation*}
\end{definition}

\begin{theorem}[Collective position and collective velocity]
Let $S$ be a free will system whose set of agents is $\{a_1, ..., a_n\}$, 
let $W$ be a discrete collective will for $S$,
and let $H$ be the discrete history of $S$ with respect to $W$.
Then
\begin{align*}
  \mathbf{x}_S(t + 1) = \mathbf{x}_S(t) + \mathbf{v}_S(t + 1) \,, &&\text{for all } t \in \mathbb{N} \,.
\end{align*}
\end{theorem}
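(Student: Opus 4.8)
The plan is to unfold the definition of collective position at time $t+1$, substitute the per-agent position update from equation~\eqref{eq:position}, and use linearity of the mass-weighted sum. Unlike the conservation of collective momentum theorem, no induction is needed here: the identity is a direct algebraic consequence of the definitions, because the mass weights $m_i$ do not depend on $t$ and therefore pass freely through the sum.

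First I would write
\begin{align*}
  \mathbf{x}_S(t+1) = \frac{\sum_i m_i \mathbf{x}_i(t+1)}{\sum_i m_i} \,.
\end{align*}
Then I would invoke equation~\eqref{eq:position}, which gives $\mathbf{x}_i(t+1) = \mathbf{x}_i(t) + \mathbf{v}_i(t+1)$ for each $i$, and substitute it inside the numerator. Splitting the resulting sum along its two summands yields
\begin{align*}
  \mathbf{x}_S(t+1) = \frac{\sum_i m_i \mathbf{x}_i(t)}{\sum_i m_i} + \frac{\sum_i m_i \mathbf{v}_i(t+1)}{\sum_i m_i} \,.
\end{align*}
Finally I would identify the first fraction with $\mathbf{x}_S(t)$ and the second with $\mathbf{v}_S(t+1)$, directly from the definitions of collective position and collective velocity, which closes the argument.

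There is essentially no obstacle to overcome. The only point worth a remark is that the division by $\sum_i m_i$ is well defined: since the agent set is finite, non-empty, and each $m_i \in \mathbb{R}^+$, we have $\sum_i m_i > 0$. In short, the statement is just the per-agent recurrence~\eqref{eq:position} averaged with (time-independent) mass weights, and linearity does the rest.
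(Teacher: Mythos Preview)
Your proof is correct and follows essentially the same approach as the paper: both invoke equation~\eqref{eq:position} for each $i$, take the mass-weighted sum, divide by $\sum_i m_i$, and identify the resulting fractions with $\mathbf{x}_S$ and $\mathbf{v}_S$. The only difference is cosmetic ordering (you start from the definition of $\mathbf{x}_S(t+1)$ and expand, while the paper starts from the per-agent recurrence and aggregates), and your extra remark about $\sum_i m_i > 0$ is a welcome sanity check.
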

\begin{proof}
By equation~\eqref{eq:position} we have
\begin{equation*}
  \mathbf{x}_i(t + 1) = \mathbf{x}_i(t) + \mathbf{v}_i(t + 1) \,,
\end{equation*}
from which it follows that
\begin{equation*}
  m_i \mathbf{x}_i(t + 1) = m_i \mathbf{x}_i(t) + m_i \mathbf{v}_i(t + 1) \,,
\end{equation*}
and therefore
\begin{equation*}
  \sum_i m_i \mathbf{x}_i(t + 1) = \sum_i m_i \mathbf{x}_i(t) + \sum_i m_i \mathbf{v}_i(t + 1) \,,
\end{equation*}
and 
\begin{equation*}
  \frac{\sum_i m_i \mathbf{x}_i(t + 1)}{\sum_i m_i} = \frac{\sum_i m_i \mathbf{x}_i(t)}{\sum_i m_i} + \frac{\sum_i m_i \mathbf{v}_i(t + 1)}{\sum_i m_i} \,,
\end{equation*}
which is equivalent to
\begin{equation*}
  \mathbf{x}_S(t + 1) = \mathbf{x}_S(t) + \mathbf{v}_S(t + 1) \,.
\end{equation*}
\end{proof}

\begin{theorem}[Conservation of collective velocity]
\label{th:conservation-velocity}
Let $S$ be a free will system whose set of agents is $\{a_1, ..., a_n\}$, 
let $W$ be a discrete collective will for $S$,
and let $H$ be the discrete history of $S$ with respect to $W$.
Then there exists a constant $\mathbf{v}_S \in \mathbb{R}^3$ such that
\begin{align*}
  \mathbf{v}_S = \mathbf{v}_S(t) \,, &&\text{for all } t \in \mathbb{N} \,.
\end{align*}
\end{theorem}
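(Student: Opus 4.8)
The plan is to derive this immediately from the \emph{Conservation of collective momentum} theorem already established above. First I would unfold the definition of collective velocity: for every $t \in \mathbb{N}$ we have $\mathbf{v}_S(t) = \frac{\sum_i m_i \mathbf{v}_i(t)}{\sum_i m_i}$, and since $\sum_i m_i \mathbf{v}_i(t) = \sum_i \mathbf{p}_i(t) = \mathbf{p}_S(t)$, this reads $\mathbf{v}_S(t) = \mathbf{p}_S(t) / \sum_i m_i$. So the whole statement reduces to the fact that $\mathbf{p}_S(t)$ does not depend on $t$.

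Next I would invoke the Conservation of collective momentum theorem to obtain a constant $\mathbf{p}_S \in \mathbb{R}^3$ with $\mathbf{p}_S(t) = \mathbf{p}_S$ for all $t \in \mathbb{N}$. Because the set of agents is finite and non-empty and each $m_i \in \mathbb{R}^+$, the quantity $M := \sum_i m_i$ is a fixed, strictly positive real number, so dividing by it is legitimate. Setting $\mathbf{v}_S := \mathbf{p}_S / M$ then yields, for every $t \in \mathbb{N}$, the chain $\mathbf{v}_S(t) = \mathbf{p}_S(t)/M = \mathbf{p}_S/M = \mathbf{v}_S$, which is exactly the claim.

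If a self-contained argument were preferred instead, I would mimic the momentum proof and show $\mathbf{v}_S(t+1) = \mathbf{v}_S(t)$ by induction on $t$: from equations~\eqref{eq:second} and~\eqref{eq:velocity} one gets $m_i \mathbf{v}_i(t+1) = m_i \mathbf{v}_i(t) + \mathbf{F}_i(t)$, summing over $i$ and applying Theorem~\ref{th:zero} kills the $\sum_i \mathbf{F}_i(t)$ term, and dividing by $M$ gives the step. Either route is essentially routine; there is no genuine obstacle here, and the only points deserving a word of care are that $M > 0$ (so the division defining $\mathbf{v}_S$ makes sense) and that the constant produced is truly $t$-independent — which is precisely what conservation of collective momentum delivers. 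In the write-up I would go with the first, shorter route, since it reuses an already-proven theorem verbatim.
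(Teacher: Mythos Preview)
Your proposal is correct and matches the paper's own proof essentially verbatim: the paper also writes $\mathbf{v}_S(t) = \frac{\sum_i m_i \mathbf{v}_i(t)}{\sum_i m_i} = \frac{\mathbf{p}_S(t)}{\sum_i m_i} = \frac{\mathbf{p}_S}{\sum_i m_i}$ and concludes. Your additional remarks about $M>0$ and the alternative inductive route are fine but not needed beyond what the paper provides.
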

\begin{proof}
It is enough to notice that
\begin{equation*}
  \mathbf{v}_S(t) = \frac{\sum_i m_i \mathbf{v}_i(t)}{\sum_i m_i} 
  = \frac{\mathbf{p}_S(t)}{\sum_i m_i} 
  = \frac{\mathbf{p}_S}{\sum_i m_i} \,.
\end{equation*}
\end{proof}

\begin{definition}[Individual kinetic energy]
Let $S$ be a free will system whose set of agents is $\{a_1, ..., a_n\}$, 
let $W$ be a discrete collective will for $S$,
and let $H$ be the discrete history of $S$ with respect to $W$.
The \emph{individual kinetic energy} $K_i(t)$ of agent $a_i$ at instant of time $t$ is
\begin{equation*}
  K_i(t) = \frac{1}{2} m_i \lvert\lvert \mathbf{v}_i(t) \rvert\rvert^2 \,.
\end{equation*}
\end{definition}

\begin{definition}[Collective kinetic energy]
Let $S$ be a free will system whose set of agents is $\{a_1, ..., a_n\}$, 
let $W$ be a discrete collective will for $S$,
and let $H$ be the discrete history of $S$ with respect to $W$.
The \emph{collective kinetic energy} $K_S(t)$ of $S$ at instant of time $t$ is
\begin{equation*}
  K_S(t) = \frac{1}{2} \sum_i m_i \lvert\lvert \mathbf{v}_S(t) \rvert\rvert^2 \,.
\end{equation*}
\end{definition}

\begin{theorem}[Conservation of collective kinetic energy]
Let $S$ be a free will system whose set of agents is $\{a_1, ..., a_n\}$, 
let $W$ be a discrete collective will for $S$,
and let $H$ be the discrete history of $S$ with respect to $W$.
Then there exists a constant $K_S \in \mathbb{R}^+_0$ such that
\begin{align*}
  K_S = K_S(t) \,, &&\text{for all } t \in \mathbb{N} \,.
\end{align*}
\end{theorem}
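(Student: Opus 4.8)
The plan is to observe that the collective kinetic energy, as defined, depends on the history only through the collective velocity $\mathbf{v}_S(t)$, so the statement reduces immediately to Theorem~\ref{th:conservation-velocity}. First I would record that the total mass $M = \sum_i m_i$ is a fixed positive real number, determined once and for all by the free will system $S$ and independent of $t$. Since the summand $\frac{1}{2} m_i \lvert\lvert \mathbf{v}_S(t) \rvert\rvert^2$ appearing in the definition of $K_S(t)$ depends on the index $i$ only through the factor $m_i$, we may pull the norm out of the sum and rewrite $K_S(t) = \frac{1}{2} M \lvert\lvert \mathbf{v}_S(t) \rvert\rvert^2$.

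Next I would invoke Theorem~\ref{th:conservation-velocity}, which supplies a constant vector $\mathbf{v}_S \in \mathbb{R}^3$ with $\mathbf{v}_S(t) = \mathbf{v}_S$ for every $t \in \mathbb{N}$. Substituting this into the rewritten expression gives $K_S(t) = \frac{1}{2} M \lvert\lvert \mathbf{v}_S \rvert\rvert^2$ for all $t$, so the constant $K_S = \frac{1}{2} M \lvert\lvert \mathbf{v}_S \rvert\rvert^2$ witnesses the claim. Finally I would note that $M > 0$ and $\lvert\lvert \mathbf{v}_S \rvert\rvert^2 \ge 0$ force $K_S \in \mathbb{R}^+_0$, matching the range asserted in the statement.

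I do not anticipate a genuine obstacle here: the substantive work is already carried out in the conservation of collective velocity, and the present theorem is a one-line corollary of it. The only mildly delicate point worth flagging is that the definition of $K_S(t)$ is phrased in terms of the \emph{collective} velocity rather than as a sum of the individual kinetic energies $K_i(t)$, so the proof needs no handling of cross terms and no cancellation argument of the kind used in Theorem~\ref{th:zero}; it is a direct consequence of the fact that a constant vector has constant norm.
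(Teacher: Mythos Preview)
Your proposal is correct and follows essentially the same approach as the paper, which also simply observes that the claim is an immediate consequence of Theorem~\ref{th:conservation-velocity}. You merely spell out the details (factoring out the total mass and checking $K_S \in \mathbb{R}^+_0$) that the paper leaves implicit.
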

\begin{proof}
The claim immediately follows from Theorem~\ref{th:conservation-velocity}.
\end{proof}

% ------------------------------------------
\section{Continuous time model of free will}
\label{se:continuous}

\begin{definition}[Continuous individual will]
\label{de:will-continuous}
Let $S$ be a free will system whose set of agents is $\{a_1, ..., a_n\}$.
A \emph{continuous individual will} of agent $a_i$ in $S$ is a set of functions
\begin{equation*}
  \left\{ \mathbf{F}_{ij} : \mathbb{R}^+_0 \to \mathbb{R}^3 \mid j \neq i \right\} \,,
\end{equation*}
such that, for each $j \neq i$ and for all $t \ge 0$, the integral
\begin{equation*}
  \int_0^t \mathbf{F}_{ij}(s) ds
\end{equation*}
exists and is finite.\footnote{Note that this integral is not a real number, but a vector in $\mathbb{R}^3$.}
\end{definition}

Intuitively, in the definition of continuous individual will, $\mathbf{F}_{ij}(t)$ is the force exterted by agent $a_i$ upon agent $a_j$ at
instant of time $t$, determined by the free will of agent $a_i$.

\begin{definition}[Continuous collective will]
Let $S$ be a free will system whose set of agents is $\{a_1, ..., a_n\}$.
A \emph{continuous collective will} for $S$ is the union $\bigcup_i w_i$, where $w_i$ is an individual will of agent $a_i$ in $S$, for all $i$.
\end{definition}

\begin{definition}[Continuous history]
Let $S$ be a free will system whose set of agents is $\{a_1, ..., a_n\}$, 
and let $W = \left\{ \mathbf{F}_{ij} : \mathbb{R}^+_0 \to \mathbb{R}^3 \mid i \neq j \right\}$ be a continuous collective will for $S$.
The \emph{continuous history} of $S$ with respect to $W$ is the following collection of functions:
\begin{itemize}
\item $\mathbf{F}_i : \mathbb{R}^+_0 \to \mathbb{R}^3$, for each $i$, where $\mathbf{F}_i(t)$ is the total force exterted on agent $a_i$ at instant of time $t$;

\item $\mathbf{a}_i : \mathbb{R}^+_0 \to \mathbb{R}^3$, for each $i$, where $\mathbf{a}_i(t)$ is the acceleration of agent $a_i$ at instant of time $t$;

\item $\mathbf{v}_i : \mathbb{R}^+_0 \to \mathbb{R}^3$, for each $i$, where $\mathbf{v}_i(t)$ is the velocity of agent $a_i$ at instant of time $t$;

\item $\mathbf{x}_i : \mathbb{R}^+_0 \to \mathbb{R}^3$, for each $i$, where $\mathbf{x}_i(t)$ is the position of agent $a_i$ at instant of time $t$.
\end{itemize}
For each $i$, the above functions are precisely defined as follows:
\begin{align} 
\label{eq:total-continuous}
  \mathbf{F}_i(t) = \sum_{j \neq i} \left( \mathbf{F}_{ji}(t) - \mathbf{F}_{ij}(t) \right) \,, &&\text{for all } t \ge 0 \,.
\end{align}

\begin{align}
\label{eq:second-continuous}
  \mathbf{a}_i(t) = \frac{\mathbf{F}_i(t)}{m_i} \,, &&\text{for all } t \ge 0 \,.
\end{align}

\begin{align}
\label{eq:velocity-continuous}
  \mathbf{v}_i(t) = \mathbf{v}_i^{(0)} + \int_0^t \mathbf{a}_i(s) ds \,, &&\text{for all } t \ge 0 \,.
\end{align}

\begin{align}
\label{eq:position-continuous}
  \mathbf{x}_i(t) = \mathbf{x}_i^{(0)} + \int_0^t \mathbf{v}_i(s) ds \,, &&\text{for all } t \ge 0 \,.
\end{align}
\end{definition}

Notice that equation~\eqref{eq:total-continuous} enforces agents to obey the Golden Rule.  
In fact, at each instant of time $t$, an agent $a_i$ that chooses to extert a force $\mathbf{F}_{ij}(t)$ upon agent $a_j$, must also 
extert the force $-\mathbf{F}_{ij}(t)$ upon itself.

% ------------------------------------------------------
\section{Continuous time properties}
\label{se:theorems-continuous}

\begin{theorem}[Balance of forces]
\label{th:zero-continuous}
Let $S$ be a free will system whose set of agents is $\{a_1, ..., a_n\}$, 
let $W$ be a continuous collective will for $S$,
and let $H$ be the continuous history of $S$ with respect to $W$.
Then, for each $i$, the following holds
\begin{align*}
  \sum_i \mathbf{F}_i(t) = 0 \,, &&\text{for all } t \ge 0 \,.
\end{align*}
\end{theorem}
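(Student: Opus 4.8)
The plan is to mirror the proof of Theorem~\ref{th:zero} verbatim, since the statement at a fixed instant $t$ involves only the finitely many vectors $\mathbf{F}_{ij}(t) \in \mathbb{R}^3$; the integrability condition in the definition of continuous individual will plays no role here, so nothing analytic is needed. First I would fix an arbitrary $t \ge 0$ and expand $\sum_i \mathbf{F}_i(t)$ using equation~\eqref{eq:total-continuous}, obtaining the double sum $\sum_i \sum_{j \neq i} \bigl( \mathbf{F}_{ji}(t) - \mathbf{F}_{ij}(t) \bigr)$.

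Next I would split this into the difference of two finite double sums, $\sum_i \sum_{j \neq i} \mathbf{F}_{ji}(t)$ and $\sum_i \sum_{j \neq i} \mathbf{F}_{ij}(t)$, which is legitimate because each sum ranges over the finite index set $\{(i,j) : 1 \le i,j \le n,\ i \neq j\}$ and vector addition in $\mathbb{R}^3$ is commutative and associative. Then I would reindex the first double sum by swapping the names of the summation variables $i$ and $j$: since $\{(i,j) : i \neq j\}$ is symmetric under this swap, $\sum_i \sum_{j \neq i} \mathbf{F}_{ji}(t) = \sum_i \sum_{j \neq i} \mathbf{F}_{ij}(t)$. Subtracting gives $\mathbf{0}$, and since $t$ was arbitrary the conclusion holds for all $t \ge 0$.

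I do not expect any real obstacle: the only thing to be careful about is making the reindexing step explicit rather than hand-waving it, and noting that finiteness of the agent set is what licenses the term rearrangement. In particular, no appeal to Theorem~\ref{th:zero} or to continuity is required — this is a self-contained combinatorial cancellation identical in spirit to the discrete case, reflecting that equation~\eqref{eq:total-continuous} builds the Golden Rule (every exerted force is matched by an equal and opposite self-force) into the definition of the total force.
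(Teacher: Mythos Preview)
Your proposal is correct and mirrors exactly what the paper does: the paper's proof simply states that it ``follows the same lines of the proof of Theorem~\ref{th:zero},'' and you have spelled out precisely that argument, expanding via equation~\eqref{eq:total-continuous}, splitting the double sum, and reindexing to obtain cancellation. Your additional remarks about finiteness of the index set and the irrelevance of the integrability condition are accurate and do not deviate from the paper's approach.
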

\begin{proof}
The proof follows the same lines of the proof of Theorem~\ref{th:zero}.
\end{proof}

\begin{definition}[Individual momentum]
Let $S$ be a free will system whose set of agents is $\{a_1, ..., a_n\}$, 
let $W$ be a continuous collective will for $S$,
and let $H$ be the continuous history of $S$ with respect to $W$.
The \emph{individual momentum} $\mathbf{p}_i(t)$ of agent $a_i$ at instant of time $t$ is
\begin{equation*}
  \mathbf{p}_i(t) = m_i \mathbf{v}_i(t) \,.
\end{equation*}
\end{definition}

\begin{definition}[Collective momentum]
Let $S$ be a free will system whose set of agents is $\{a_1, ..., a_n\}$, 
let $W$ be a continuous collective will for $S$,
and let $H$ be the continuous history of $S$ with respect to $W$.
The \emph{collective momentum} $\mathbf{p}_S(t)$ of $S$ at instant of time $t$ is
\begin{equation*}
  \mathbf{p}_S(t) = \sum_i \mathbf{p}_i(t) \,.
\end{equation*}
\end{definition}

\begin{theorem}[Conservation of collective momentum]
Let $S$ be a free will system whose set of agents is $\{a_1, ..., a_n\}$, 
let $W$ be a continuous collective will for $S$,
and let $H$ be the continuous history of $S$ with respect to $W$.
Then there exists a constant $\mathbf{p}_S \in \mathbb{R}^3$ such that
\begin{align*}
  \mathbf{p}_S = \mathbf{p}_S(t) \,, &&\text{for all } t \ge 0 \,.
\end{align*}
\end{theorem}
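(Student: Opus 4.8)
The plan is to imitate the proof of the discrete case, replacing the induction on $t$ with an application of linearity of the (vector-valued) integral together with Theorem~\ref{th:zero-continuous}. Concretely, I would define the constant explicitly as $\mathbf{p}_S := \sum_i m_i \mathbf{v}_i^{(0)}$ and then show $\mathbf{p}_S(t) = \mathbf{p}_S$ for every $t \ge 0$.

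First I would unfold the definitions: by the definitions of individual and collective momentum, $\mathbf{p}_S(t) = \sum_i m_i \mathbf{v}_i(t)$. Substituting equation~\eqref{eq:velocity-continuous} gives $\mathbf{p}_S(t) = \sum_i m_i \bigl( \mathbf{v}_i^{(0)} + \int_0^t \mathbf{a}_i(s)\,ds \bigr)$, and then equation~\eqref{eq:second-continuous}, in the form $m_i \mathbf{a}_i(s) = \mathbf{F}_i(s)$, rewrites each summand as $m_i \mathbf{v}_i^{(0)} + \int_0^t \mathbf{F}_i(s)\,ds$.

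Next I would move the finite sum over $i$ inside the integral; since this is just componentwise linearity of the Riemann integral applied to each of the three coordinates, no analytic subtlety arises. This yields $\mathbf{p}_S(t) = \sum_i m_i \mathbf{v}_i^{(0)} + \int_0^t \bigl( \sum_i \mathbf{F}_i(s) \bigr)\,ds$. By Theorem~\ref{th:zero-continuous}, the integrand $\sum_i \mathbf{F}_i(s)$ is the zero vector for every $s \ge 0$, so the integral vanishes and $\mathbf{p}_S(t) = \sum_i m_i \mathbf{v}_i^{(0)} = \mathbf{p}_S$, which is independent of $t$.

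There is no genuine obstacle here; the only points worth a word of care are that each $\int_0^t \mathbf{a}_i(s)\,ds$ is well-defined and finite --- which holds because $\mathbf{a}_i = \mathbf{F}_i / m_i$ is a finite linear combination of the functions $\mathbf{F}_{ji}$, each integrable by Definition~\ref{de:will-continuous} --- and that exchanging the finite sum over $i$ with the integral is legitimate, which is immediate from linearity. In particular, unlike the discrete case, no induction is needed.
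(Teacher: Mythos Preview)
Your proof is correct and follows essentially the same route as the paper's own argument: start from equation~\eqref{eq:velocity-continuous}, multiply by $m_i$, use equation~\eqref{eq:second-continuous} to replace $m_i\mathbf{a}_i$ by $\mathbf{F}_i$, sum over $i$, swap the finite sum with the integral, and invoke Theorem~\ref{th:zero-continuous} to kill the integral term. Your version is slightly more explicit about integrability and about why the sum--integral interchange is legitimate, but there is no substantive difference in approach.
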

\begin{proof}
It is enough to prove that $\mathbf{p}_S(t) = \mathbf{p}_S(0)$.  We have
\begin{align*}
  \mathbf{v}_i(t) &= \mathbf{v}_i^{(0)} + \int_0^t \mathbf{a}_i(s) ds \\
  &= \mathbf{v}_i(0) + \int_0^t \frac{\mathbf{F}_i(s)}{m_i} ds \,,
\end{align*}
which implies
\begin{equation*}
  m_i \mathbf{v}_i(t) = m_i \mathbf{v}_i(0) + \int_0^t \mathbf{F}_i(s) ds \,.
\end{equation*}
But then
\begin{equation*}
  \sum_i m_i \mathbf{v}_i(t) = \sum_i m_i \mathbf{v}_i(0) + \sum_i \int_0^t \mathbf{F}_i(s) ds \,.
\end{equation*}
Thus,
\begin{equation*}
  \mathbf{p}_S(t) = \mathbf{p}_S(0) + \int_0^t \sum_i \mathbf{F}_i(s) ds \,,
\end{equation*}
and the claim follows by Theorem~\ref{th:zero-continuous}.
\end{proof}

\begin{definition}[Collective position]
Let $S$ be a free will system whose set of agents is $\{a_1, ..., a_n\}$, 
let $W$ be a continuous collective will for $S$,
and let $H$ be the continuous history of $S$ with respect to $W$.
The \emph{collective position} $\mathbf{x}_S(t)$ of $S$ at instant of time $t$ is
\begin{equation*}
  \mathbf{x}_S(t) = \frac{\sum_i m_i \mathbf{x}_i(t)}{\sum_i m_i} \,.
\end{equation*}
\end{definition}

\begin{definition}[Collective velocity]
Let $S$ be a free will system whose set of agents is $\{a_1, ..., a_n\}$, 
let $W$ be a continuous collective will for $S$,
and let $H$ be the continuous history of $S$ with respect to $W$.
The \emph{collective velocity} $\mathbf{v}_S(t)$ of $S$ at instant of time $t$ is
\begin{equation*}
  \mathbf{v}_S(t) = \frac{\sum_i m_i \mathbf{v}_i(t)}{\sum_i m_i} \,.
\end{equation*}
\end{definition}

\begin{theorem}[Collective position and collective velocity]
Let $S$ be a free will system whose set of agents is $\{a_1, ..., a_n\}$, 
let $W$ be a continuous collective will for $S$,
and let $H$ be the continuous history of $S$ with respect to $W$.
Then
\begin{align*}
  \mathbf{x}_S(t) = \mathbf{x}_S(0) + \int_0^t \mathbf{v}_S(s) ds \,, &&\text{for all } t \ge 0 \,.
\end{align*}
\end{theorem}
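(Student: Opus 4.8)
The plan is to mimic the proof of the analogous discrete-time theorem, replacing the one-step recurrence~\eqref{eq:position} by its continuous counterpart~\eqref{eq:position-continuous}. I would start from the defining equation of the continuous history, namely $\mathbf{x}_i(t) = \mathbf{x}_i^{(0)} + \int_0^t \mathbf{v}_i(s)\,ds$ for each agent $a_i$, and observe that, since $\mathbf{x}_i(0) = \mathbf{x}_i^{(0)}$ by definition, this reads $\mathbf{x}_i(t) = \mathbf{x}_i(0) + \int_0^t \mathbf{v}_i(s)\,ds$.

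Next I would multiply this identity by the mass $m_i$, obtaining $m_i\mathbf{x}_i(t) = m_i\mathbf{x}_i(0) + m_i\int_0^t \mathbf{v}_i(s)\,ds = m_i\mathbf{x}_i(0) + \int_0^t m_i\mathbf{v}_i(s)\,ds$, where the constant $m_i$ has been moved inside the integral. Summing over all $i$ and using linearity of the (finite) sum together with linearity of the integral to exchange $\sum_i$ and $\int_0^t$, I get $\sum_i m_i\mathbf{x}_i(t) = \sum_i m_i\mathbf{x}_i(0) + \int_0^t \sum_i m_i\mathbf{v}_i(s)\,ds$. Finally, dividing through by $\sum_i m_i$ and invoking the definitions of collective position $\mathbf{x}_S$ and collective velocity $\mathbf{v}_S$ yields $\mathbf{x}_S(t) = \mathbf{x}_S(0) + \int_0^t \mathbf{v}_S(s)\,ds$, as desired.

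I do not expect any real obstacle here. The only points that deserve a word of justification are that the interchange of the finite sum with the integral is legitimate (it follows purely from linearity of integration, no convergence theorem being needed for a finite sum), and that each integral $\int_0^t \mathbf{v}_i(s)\,ds$ is well defined and finite — which is guaranteed because $\mathbf{v}_i$ is, by~\eqref{eq:velocity-continuous}, the sum of a constant and an integral of $\mathbf{a}_i$, hence continuous, and $\mathbf{x}_i(t)$ is assumed to exist as part of the continuous history. Everything else is a routine algebraic manipulation carried out componentwise in $\mathbb{R}^3$.
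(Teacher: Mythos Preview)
Your proposal is correct and follows essentially the same approach as the paper's own proof: start from equation~\eqref{eq:position-continuous}, multiply by $m_i$, sum over $i$ (exchanging the finite sum with the integral), divide by $\sum_i m_i$, and read off the definitions of $\mathbf{x}_S$ and $\mathbf{v}_S$. Your additional remarks about the legitimacy of the sum--integral interchange and the well-definedness of the integrals are welcome clarifications but do not change the structure of the argument.
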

\begin{proof}
We have
\begin{equation*}
  \mathbf{x}_i(t) = \mathbf{x}_i^{(0)} + \int_0^t \mathbf{v}_i(s) ds \,.
\end{equation*}
from which it follows that
\begin{equation*}
  m_i \mathbf{x}_i(t) = m_i \mathbf{x}_i(0) + \int_0^t m_i \mathbf{v}_i(s) ds \,,
\end{equation*}
and therefore
\begin{equation*}
  \sum_i m_i \mathbf{x}_i(t) = \sum_i m_i \mathbf{x}_i(0) + \int_0^t \sum_i m_i \mathbf{v}_i(s) ds \,,
\end{equation*}
and 
\begin{equation*}
  \frac{\sum_i m_i \mathbf{x}_i(t)}{\sum_i m_i} = \frac{\sum_i m_i \mathbf{x}_i(0)}{\sum_i m_i} + \int_0^t \frac{\sum_i m_i \mathbf{v}_i(s)}{\sum_i m_i} ds \,,
\end{equation*}
which is equivalent to
\begin{equation*}
  \mathbf{x}_S(t) = \mathbf{x}_S(0) + \int_0^t \mathbf{v}_S(s) ds \,.
\end{equation*}
\end{proof}

\begin{theorem}[Conservation of collective velocity]
\label{th:conservation-velocity-continuous}
Let $S$ be a free will system whose set of agents is $\{a_1, ..., a_n\}$, 
let $W$ be a continuous collective will for $S$,
and let $H$ be the continuous history of $S$ with respect to $W$.
Then there exists a constant $\mathbf{v}_S \in \mathbb{R}^3$ such that
\begin{align*}
  \mathbf{v}_S = \mathbf{v}_S(t) \,, &&\text{for all } t \ge 0 \,.
\end{align*}
\end{theorem}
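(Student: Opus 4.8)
The plan is to reduce this statement to the already-established conservation of collective momentum in the continuous setting, exactly mirroring the proof of Theorem~\ref{th:conservation-velocity} in the discrete case. First I would recall the definition of collective velocity, $\mathbf{v}_S(t) = \left( \sum_i m_i \mathbf{v}_i(t) \right) / \sum_i m_i$, and the definition of collective momentum, $\mathbf{p}_S(t) = \sum_i \mathbf{p}_i(t) = \sum_i m_i \mathbf{v}_i(t)$. These two identities immediately give $\mathbf{v}_S(t) = \mathbf{p}_S(t) / \sum_i m_i$ for every $t \ge 0$, where the denominator is a fixed strictly positive real number since each $m_i \in \mathbb{R}^+$ and the agent set is finite and non-empty.

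Next I would invoke the continuous-time Conservation of collective momentum theorem, which supplies a constant vector $\mathbf{p}_S \in \mathbb{R}^3$ with $\mathbf{p}_S(t) = \mathbf{p}_S$ for all $t \ge 0$. Substituting this into the previous display yields
\begin{equation*}
  \mathbf{v}_S(t) = \frac{\mathbf{p}_S(t)}{\sum_i m_i} = \frac{\mathbf{p}_S}{\sum_i m_i} \,,
\end{equation*}
so setting $\mathbf{v}_S := \mathbf{p}_S / \sum_i m_i \in \mathbb{R}^3$ gives a constant that equals $\mathbf{v}_S(t)$ for every $t \ge 0$, which is exactly the claim.

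There is essentially no obstacle here: the argument is a one-line division by the total mass applied to a conservation law already proven. The only point worth stating explicitly is that $\sum_i m_i \neq 0$, which holds because the masses are positive reals and there is at least one agent, so the division is legitimate and the resulting $\mathbf{v}_S$ is a well-defined vector in $\mathbb{R}^3$. I would therefore keep the proof to just these couple of displayed equalities and a pointer to the momentum-conservation theorem.
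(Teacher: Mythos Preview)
Your proposal is correct and follows exactly the same approach as the paper: rewrite $\mathbf{v}_S(t)$ as $\mathbf{p}_S(t)/\sum_i m_i$ and then invoke the continuous-time conservation of collective momentum to conclude this quantity is the constant $\mathbf{p}_S/\sum_i m_i$. The only difference is that you spell out explicitly that $\sum_i m_i > 0$, a detail the paper leaves implicit.
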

\begin{proof}
It is enough to notice that
\begin{equation*}
  \mathbf{v}_S(t) = \frac{\sum_i m_i \mathbf{v}_i(t)}{\sum_i m_i} 
  = \frac{\mathbf{p}_S(t)}{\sum_i m_i} 
  = \frac{\mathbf{p}_S}{\sum_i m_i} \,.
\end{equation*}
\end{proof}

\begin{definition}[Individual kinetic energy]
Let $S$ be a free will system whose set of agents is $\{a_1, ..., a_n\}$, 
let $W$ be a continuous collective will for $S$,
and let $H$ be the continuous history of $S$ with respect to $W$.
The \emph{individual kinetic energy} $K_i(t)$ of agent $a_i$ at instant of time $t$ is
\begin{equation*}
  K_i(t) = \frac{1}{2} m_i \lvert\lvert \mathbf{v}_i(t) \rvert\rvert^2 \,.
\end{equation*}
\end{definition}

\begin{definition}[Collective kinetic energy]
Let $S$ be a free will system whose set of agents is $\{a_1, ..., a_n\}$, 
let $W$ be a continuous collective will for $S$,
and let $H$ be the continuous history of $S$ with respect to $W$.
The \emph{collective kinetic energy} $K_S(t)$ of $S$ at instant of time $t$ is
\begin{equation*}
  K_S(t) = \frac{1}{2} \sum_i m_i \lvert\lvert \mathbf{v}_S(t) \rvert\rvert^2 \,.
\end{equation*}
\end{definition}

\begin{theorem}[Conservation of collective kinetic energy]
Let $S$ be a free will system whose set of agents is $\{a_1, ..., a_n\}$, 
let $W$ be a continuous collective will for $S$,
and let $H$ be the continuous history of $S$ with respect to $W$.
Then there exists a constant $K_S \in \mathbb{R}^+_0$ such that
\begin{align*}
  K_S = K_S(t) \,, &&\text{for all } t \ge 0 \,.
\end{align*}
\end{theorem}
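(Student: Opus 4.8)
The plan is to observe that the collective kinetic energy $K_S(t)$ depends on the time $t$ only through the collective velocity $\mathbf{v}_S(t)$, and then to invoke Theorem~\ref{th:conservation-velocity-continuous}, which asserts that $\mathbf{v}_S(t)$ does not depend on $t$. This mirrors exactly the discrete-time argument, where the analogous claim was deduced immediately from the conservation of collective velocity.

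Concretely, I would first unfold the definition of collective kinetic energy and pull the mass sum out of the norm, writing
\begin{equation*}
  K_S(t) = \frac{1}{2} \sum_i m_i \lvert\lvert \mathbf{v}_S(t) \rvert\rvert^2 = \frac{1}{2} \left( \sum_i m_i \right) \lvert\lvert \mathbf{v}_S(t) \rvert\rvert^2 \,.
\end{equation*}
Then I would apply Theorem~\ref{th:conservation-velocity-continuous} to replace $\mathbf{v}_S(t)$ by the constant vector $\mathbf{v}_S$, obtaining
\begin{equation*}
  K_S(t) = \frac{1}{2} \left( \sum_i m_i \right) \lvert\lvert \mathbf{v}_S \rvert\rvert^2 \qquad \text{for all } t \ge 0 \,,
\end{equation*}
and I would simply let $K_S$ denote this common value, which settles the equality $K_S = K_S(t)$.

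It then remains only to check that $K_S$ indeed lies in $\mathbb{R}^+_0$, which is immediate: each $m_i \in \mathbb{R}^+$ by the definition of a free will system, and $\lvert\lvert \mathbf{v}_S \rvert\rvert^2 \ge 0$, so their product is a finite non-negative real. There is essentially no obstacle here; the statement is a one-line corollary of the conservation of collective velocity. The only point worth a moment's care is conceptual rather than technical: $K_S(t)$ is built from the collective velocity $\mathbf{v}_S(t)$, not from the individual velocities $\mathbf{v}_i(t)$, so the result does not assert (and in general it is false) that the sum of the individual kinetic energies $K_i(t)$ is conserved — only the quantity as defined is.
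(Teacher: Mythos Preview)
Your proposal is correct and follows exactly the paper's approach: the paper's proof is the single line ``The claim immediately follows from Theorem~\ref{th:conservation-velocity-continuous},'' and you have simply written out that deduction explicitly, including the verification that $K_S \in \mathbb{R}^+_0$.
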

\begin{proof}
The claim immediately follows from Theorem~\ref{th:conservation-velocity-continuous}.
\end{proof}

% -------------------------------------
\section{Comparison with Newton's laws}
\label{se:newton}

\subsection{Newton's first law}
Newtont's first law of motion states that a body continues in its state of rest, or in uniform motion in a straight line, unless acted upon by a force.
Mathematically, this law can be expressed as
\begin{equation*}
  \mathbf{F} = 0 \iff \mathbf{a} = 0 \,,
\end{equation*}
where $\mathbf{F}$ is the net force acting upon the body, and $\textbf{a}$ is the acceleration of the body.
In our model of free will, where bodies are replaced by agents, Newton's first law holds clearly as a consequence of equation~\eqref{eq:second}
in the discrete case, and of equation~\eqref{eq:second-continuous} in the continuous case.

\subsection{Newton's second law}
Newton's second law of motion states that a body acted upon by a force moves with acceleration directly proportional to the intensity of the force, and inversely
proportional to its mass, with the acceleration being in the same direction of the force.
Mathematically, this law can be expressed as
\begin{equation*}
  \mathbf{F} = m \mathbf{a} \,,
\end{equation*}
where $\mathbf{F}$ is the net force acting upon the body, $m$ is the mass of the body, and $\textbf{a}$ the acceleration of the body.
In our model of free will, Newton's second law holds clearly as a consequence of equation~\eqref{eq:second}
in the discrete case, and of equation~\eqref{eq:second-continuous} in the continuous case.

\subsection{Newton's third law}
Newton's third law of motion states that if two bodies exert forces on each other, these forces are equal in intensity and opposite in direction.
Mathematically, this law can be expressed as
\begin{equation}
\label{eq:third}
  \mathbf{F}_A = - \mathbf{F}_B \,,
\end{equation}
where $\mathbf{F}_A$ is the force exterted by body $A$ on body $B$, and $\mathbf{F}_B$ is the force exterted by body $B$ on body $A$.
In our model of free will, Newtons's third law still holds, but provided it is given a conceptually different interpretation.

In Definition~\ref{de:will}, we denoted with $\mathbf{F}_{ij}$ the force exterted by agent $a_i$ upon agent $a_j$, determined by the free will of agent $a_i$.
Conversely, we denoted with $\mathbf{F}_{ji}$ the force exterted by agent $a_j$ upon agent $a_i$, determined by the free will of agent $a_j$.
Now, the equation $\mathbf{F}_{ij} = -\mathbf{F}_{ji}$ does not need to hold.  Does this mean that in our model, Newton's third law is false?
Not necessarily.  Let us see why.

By equation~\eqref{eq:total} in the discrete case or by equation~\eqref{eq:total-continuous} in the continuous case, 
when agent $a_i$ has the power to extert a force $\mathbf{F}_{ij}$ upon agent $a_j$, it must pay
the price of having to extert upon itself the force $-\mathbf{F}_{ij}$.
Now, consider a finite free will system $S = \{a_1, a_2\}$ with only two agents.  Then, by equation~\eqref{eq:total} or equation~\eqref{eq:total-continuous}, we have
\begin{equation*}
  \mathbf{F}_1 = \mathbf{F}_{21} - \mathbf{F}_{12} \,,
\end{equation*}
and
\begin{equation*}
  \mathbf{F}_2 = \mathbf{F}_{12} - \mathbf{F}_{21} \,,
\end{equation*}
from which it follows
\begin{equation}
\label{eq:us}
  \mathbf{F}_1 = - \mathbf{F}_2 \,.
\end{equation}
Equation~\eqref{eq:us} is the analogous of equation~\eqref{eq:third}.  However, note that $\mathbf{F}_1$ is not the force extered by agent $a_2$ on agent $a_1$,
but is instead the net force exterted on agent $a_1$ in a finite free will system comprising only two agents.  Similarly, $\mathbf{F}_2$ is not the force
exterted by agent $a_1$ on agent $a_2$, but is the net force exterted on agent $a_2$ in a finite free will system comprising only two agents.  
It follows that Newton's third law becomes true in our model of free will when the number of agents in a finite free will system is equal to $2$, provided the
meaning of $\mathbf{F}_1$ and $\mathbf{F}_2$ is interpreted adequately.

Let us discuss some practical cases of our different interpretation of Newton's third law.  When a person walks, the common interpration
is that the feet push against the floor, and the floor pushes against the feet.  In our interpretation, instead, the feet push against the floor,
and the opposite force of equal intensity that pushes from the floor against the feet is generated by the feet themselves!  Yes, there are two opposite
forces of equal intensity, but they are both generated by the feet, and not by the floor.

Now, consider the case of two fists colliding each other.  In this case, there are two agents.  Let $a_1$ be the first fist, and let $a_2$ be the second fist.
In this case there are four forces involved:
\begin{itemize}
\item the force $\mathbf{F}_{12}$ that the first fist exterts on the second, decided by free will;

\item the opposite force $-\mathbf{F}_{12}$ that the first fist exterts on itself, and that appears to come from the second fist;

\item the force $\mathbf{F}_{21}$ that the second fist exterts on the first, decided by free will;

\item the opposite force -$\mathbf{F}_{21}$ that the second fist exterts on itself, and that appears to come from the first fist.
\end{itemize} 
Now, it does not need to be the case that $\mathbf{F}_{12} = -\mathbf{F}_{21}$, since any fist may hit with more force than the other.
However, it must be the case that $\mathbf{F}_1 = -\mathbf{F}_2$, where $\mathbf{F}_1$ is the net force exterted on the first fist, and
$\mathbf{F}_2$ is the net force exterted on the second fist.

% -----------------------------
\section{When choices are made}
\label{se:timing}

A subtle point of our model of free will is when agents construct their individual wills.  
From Definition~\ref{de:will} in the discrete case or Definition~\ref{de:will-continuous} in the continuous case, it may seem that 
agents choose their individual wills before time begins, but if this were the case, then the histories of free will systems would
be completely deterministic.
Instead, agents choose their indivisual wills gradually, as time goes by.  

Precisely, in both the discrete and continuous cases, the values $\mathbf{F}_{ij}(t)$ are chosen by the agents exactly at instant of time $t$,
after observing the positions $\mathbf{x}_i(t)$ and velocities $\mathbf{v}_i(t)$ at instant of time $t$.
We also remark that, in the discrete case the positions $\mathbf{x}_i(t)$ and velocities $\mathbf{v}_i(t)$ at instant of time $t$
are completely defined by the history of the system up to instant of time $t - 1$.  Instead, in the continous case the 
positions $\mathbf{x}_i(t)$ and velocities $\mathbf{v}_i(t)$ at instant of time $t$ are completely defined by the history
of the system in the half open interval $[0, t)$.  We finally notice that every discrete or continuous history up to
instant of tiem $t$ is not influenced by the individual wills chosen by the agents at those instants of time $t' > t$.
In other words, the future does not influence the past, nor does influence the present moment.

% -------------------------
\section{Bounded free will}
\label{se:bounded}

In our model of free will, agents must observe the Golden Rule, as enforced by equation~\eqref{eq:total} in the discrete case and equation~\eqref{eq:total-continuous} in the continuous case.
But apart from this obligation, agents are quite powerful.  They have in fact the power to extert their forces on each other, no matter what is the distance between them.  
In other words, agents' interactions are non-local.
Furthermore, our Definition~\ref{de:will} of discrete individual will is quite liberal, since the functions $\mathbf{F}_{ij} : \mathbb{N} \to \mathbb{R}^3$ can be just about anything.
Similarly, our Definition~\ref{de:will-continuous} of continuous individual free will is also quite liberal, since apart from the existence of an integral, it virtually
does not pose any restriction on the functions $\mathbf{F}_{ij} : \mathbb{R}^+_0 \to \mathbb{R}^3$.  One may indeed wonder whether all this power is excessive.

To reduce the power of agents, some restrictions on free will may be imposed.  The following definition aims to reduce the non-locality of interactions between agents, as well
as the intensity of the interactions between agents.

\begin{definition}[Bounded free will]
Let $S$ be a free will system whose set of agents is $\{a_1, ..., a_n\}$, 
let $W$ be a discrete (respectively, continuous) collective will for $S$,
and let $H$ be the discrete (respectively, continuous) history of $S$ with respect to $W$.
We say that $W$ is \emph{bounded} for $S$ if there exists a constant $\lambda > 0$ and a constant $\epsilon > 0$ such that, for all $i, j$, we have
\begin{align*}
  \lvert\lvert \mathbf{F}_{ij}(t) \rvert\rvert \le \frac{\lambda}{\lvert\lvert \mathbf{x}_i(t) - \mathbf{x}_j(t) \rvert\rvert + \epsilon} \,, &&\text{for all } t \ge 0
\end{align*} 
\end{definition}

According to the definition of bounded free will, the intensities of the forces $\mathbf{F}_{ij}$ that agents extert on each other are limited by $\frac{\lambda}{\epsilon}$.
Notice also that, when agents are far apart from each other, their distance further limits the intensity of their interactions, thus reducing non-locality.

\begin{example}
\label{ex:bounded}

Let $S = \{a_1, a_2\}$ be the free will system specified in Example~\ref{ex:system}.  
Let us specify the following bounded collective will $W$
\begin{align*}
  \mathbf{F}_{12}(t) &= \left( \frac{2}{\lvert\lvert \mathbf{x}_i(t) - \mathbf{x}_j(t) \rvert\rvert + 1}, 0, 0 \right) \,, &\text{for all } t \ge 0 \,, \\
  \mathbf{F}_{21}(t) &= \left( \frac{-1}{\lvert\lvert \mathbf{x}_i(t) - \mathbf{x}_j(t) \rvert\rvert + 1}, 0, 0 \right) \,, &\text{for all } t \ge 0 \,.
\end{align*}
Using a program written in Java, we run a simulation of the discrete history of $S$ with respect to $W$.
If the simulation is run for $10^6$ iterations, the final positions and velocities computed are the following
\begin{verbatim}
  t = 1000000
    x[1] = (-6.54865e+06, 0.0, 0.0)
    v[1] = (-6.78726, 0.0, 0.0)
    x[2] = (6.54865e+06, 0.0, 0.0)
    v[2] = (6.78726, 0.0, 0.0)
\end{verbatim}
A closer look at the simulation shows that, while the positions $\mathbf{x}_i(t)$ diverge, the velocities $\mathbf{v}_i(t)$ converge.
In particular, the values $\lvert \lvert \mathbf{v}_i(t) \rvert\rvert$ are limited and converge to a maximum value.
\end{example}

Example~\ref{ex:bounded} suggests that free will systems subject to bounded collective wills have histories in which there
is a maximum velocity allowed.  This observation leads us to the following conjecture.

\begin{conjecture}
\label{co:light}

Let $S$ be a free will system whose set of agents is $\{a_1, ..., a_n\}$, 
let $W$ be a discrete (respectively, continuous) bounded collective will for $S$,
and let $H$ be the discrete (respectively, continuous) history of $S$ with respect to $W$.
Then there exists a constant $c > 0$ such that, for each i
\begin{align*}
  \lvert \lvert \mathbf{v}_i(t) \rvert\rvert \le c \,, && \text{for all } t \ge 0 \,.
\end{align*}
\end{conjecture}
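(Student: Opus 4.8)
The plan is to reduce the conjecture to a quantitative statement about how fast the mutual distances $\lVert \mathbf{x}_i(t) - \mathbf{x}_j(t) \rVert$ grow, and then attack that statement by a bootstrap argument. I will describe the continuous case; the discrete case is entirely analogous, with sums in place of integrals.

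First I would combine equations~\eqref{eq:velocity-continuous} and~\eqref{eq:second-continuous} to write
\[
  \mathbf{v}_i(t) = \mathbf{v}_i^{(0)} + \frac{1}{m_i} \int_0^t \mathbf{F}_i(s)\,ds \,,
\]
and then bound $\lVert \mathbf{F}_i(s) \rVert$ using the Golden-Rule identity~\eqref{eq:total-continuous}, the triangle inequality, and the definition of bounded free will, obtaining $\lVert \mathbf{F}_i(s) \rVert \le \sum_{j \neq i} 2\lambda / (\lVert \mathbf{x}_i(s) - \mathbf{x}_j(s) \rVert + \epsilon)$. Hence $\lVert \mathbf{v}_i(t) \rVert \le \lVert \mathbf{v}_i^{(0)} \rVert + (2/m_i) \sum_{j \neq i} \int_0^t ds / (\lVert \mathbf{x}_i(s) - \mathbf{x}_j(s) \rVert + \epsilon)$, so it suffices to show that each integral $I_{ij} = \int_0^\infty ds / (\lVert \mathbf{x}_i(s) - \mathbf{x}_j(s) \rVert + \epsilon)$ converges; a convenient sufficient condition is that every mutual distance eventually grows at least like $t^{1+\delta}$ for some $\delta > 0$. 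The crude estimate $\lVert \mathbf{x}_i(s) - \mathbf{x}_j(s) \rVert + \epsilon \ge \epsilon$ already gives $\lVert \mathbf{F}_i \rVert \le 2\lambda(n-1)/\epsilon$, so every acceleration, and therefore every relative acceleration $\ddot{\mathbf{r}}_{ij} := \ddot{\mathbf{x}}_i - \ddot{\mathbf{x}}_j$, is bounded in norm by a constant $M$; thus each $\lVert \mathbf{r}_{ij}(t) \rVert$ grows at most quadratically, and the real content is a matching lower bound.

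For the lower bound I would pass to the centre-of-mass frame (legitimate by Theorem~\ref{th:conservation-velocity-continuous}) and study the pair coordinate $\mathbf{r}_{ij}$, whose dynamics is a bounded, distance-suppressed perturbation of free motion: the very forces that could keep $a_i$ and $a_j$ together are precisely the ones that are damped when they are close. One would then set up a bootstrap — assume a provisional velocity bound, deduce an upper bound on the growth of $\lVert \mathbf{r}_{ij}(t) \rVert$, feed it into the force estimate to obtain a decay rate for $\lVert \mathbf{F}_i \rVert$, and hope to recover a strictly smaller velocity bound, iterating to a uniform constant $c$ — with a Lyapunov function in $(\mathbf{r}_{ij}, \dot{\mathbf{r}}_{ij})$ combining the squared relative speed with $\log(\lVert \mathbf{r}_{ij} \rVert + \epsilon)$ as the natural candidate, since that combination is exactly conserved for a purely radial force of magnitude $\lambda/(\lVert \mathbf{r}_{ij} \rVert + \epsilon)$.

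The hard part — and the reason the statement is only a conjecture — is that this bootstrap does not close. If one only knows $\lVert \mathbf{v}_i(t) \rVert \le c$, then $\lVert \mathbf{r}_{ij}(t) \rVert$ grows at most linearly, so $1/(\lVert \mathbf{r}_{ij}(t) \rVert + \epsilon)$ decays at most like $1/t$ and $I_{ij}$ may diverge logarithmically; the velocity bound one recovers is then $O(\log t)$, which is worse than the assumption. Concretely, a two-agent bounded will that points a radial force of magnitude $\lambda/(\lVert \mathbf{r}_{12} \rVert + \epsilon)$ so as to push the agents apart produces, by the Lyapunov computation above, $\lVert \mathbf{v}_i(t) \rVert^2 \sim \lambda \log \lVert \mathbf{r}_{12}(t) \rVert$ with $\lVert \mathbf{r}_{12}(t) \rVert$ growing only like $t\sqrt{\log t}$, so the velocities diverge, albeit extremely slowly. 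This scenario appears genuinely realizable, which suggests that Conjecture~\ref{co:light} is false as literally stated and should be either weakened (for instance to a sub-polynomial, or merely $o(t)$, growth bound on $\lVert \mathbf{v}_i(t) \rVert$) or supplemented with an extra hypothesis on $W$ — such as monotonicity of the mutual distances, or an $L^1$-in-time bound on the forces — under which the bootstrap does close; proving the conjecture in one of these amended forms is what I would actually attempt.
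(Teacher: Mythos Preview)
The paper offers no proof of this statement: it is explicitly labelled a conjecture, and the authors say they ``do not have yet a formal proof \ldots\ but only simulations that hint at it.'' So there is nothing to compare your argument against on the paper's side.

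Your proposal, however, does considerably more than the paper: you correctly identify that the conjecture is \emph{false} as stated. The two-agent radial counterexample you describe is sound. With equal masses $m$, a purely repulsive bounded will of maximal magnitude $\lambda/(\lVert\mathbf{r}_{12}\rVert+\epsilon)$ gives, exactly as your Lyapunov remark predicts, the conserved quantity
\[
  \lVert \mathbf{v}_1(t)\rVert^2 \;-\; \frac{2\lambda}{m}\,\log\bigl(\lVert\mathbf{r}_{12}(t)\rVert+\epsilon\bigr),
\]
so $\lVert\mathbf{v}_1(t)\rVert$ diverges like $\sqrt{\log t}$.

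What is striking is that your counterexample is essentially the paper's own Example~\ref{ex:bounded}. There the net force on each agent has magnitude $3/(r+1)$ with $r=\lVert\mathbf{r}_{12}\rVert$, and the same energy computation gives $\lVert\mathbf{v}_1\rVert^2 = 3\log\bigl((r+1)/3\bigr)$. Plugging in the simulated $r\approx 1.31\times 10^7$ at $t=10^6$ yields $\lVert\mathbf{v}_1\rVert\approx 6.77$, in excellent agreement with the reported $6.787$. The authors mistook extremely slow logarithmic growth for convergence; to double that speed one would need $r$ of order $10^{27}$, far beyond any feasible simulation. Your diagnosis and your suggested amendments (sub-polynomial growth, or an $L^1$-in-time hypothesis on the forces) are the right way to salvage the statement.
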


We do not have yet a formal proof of Conjecture~\ref{co:light}, but only simulations that hint at it.
If the conjecture were proved to be true, it would mean that bounded free will limits not only the intensities
of the forces between the agents, but also the maximum velocity of agents.

% ------------------
\section{Conclusion}
\label{se:conclusion}
In order to study the concept of free will, we have extended Newton's laws of motion with a mathematical
model in which bodies are replaced with agents endowed with free will.  
In our model, the free will of agents is bound by the
Golden Rule, an ethic found in most cultures which states that one should wish upon others as one wishes
upon himself.  Formally, our model contains the following rule: whenever an agent exterts a force on another
agent, it must also extert upon itself a force of equal intensity and opposite direction.

Our model of free will requires that the number of involved agents is finite.  It would be nice to extend it
to an infinite number of agents, although it is not clear how such an extension would work, since it would
need to handle the presence of infinite sums.

Our model of free will specifies the limits and rules that the choices of agents must abide, but it does
not explain how these choices are made.  Indeed, the existence of free will seems to imply that the choices 
of agents cannot be random, and must instead rely on some form of intelligence which is hidden and immaterial.

Is our model of free will a model of reality?  That is hard to say, because of inherent difficulties in 
verifying it empirically.  Indeed, free will seems to be a property of complex beings, whereas in
our model agents are not complex beings but points.  Do points have free will in reality?  We do not know.
On the other hand, the closer approximation in reality to our concept of point is the concept of elementary particle, and elementary
particles may have free will, as stated by the free will theorem of Conway and Kochen.

\bibliography{will}

\end{document}